\documentclass[reqno,10pt]{amsart}		% reqno gives eqn numbering on rhs

\setlength{\textwidth}{\paperwidth}
\addtolength{\textwidth}{-2.5in}
\setlength{\textheight}{\paperheight}
\addtolength{\textheight}{-2.5in}
\calclayout

\usepackage[english]{babel}
\usepackage{amsfonts, amsmath, amssymb, amsthm}
\usepackage{hyperref}

\newtheorem{lemma}{Lemma}
\newtheorem{proposition}{Proposition}
\theoremstyle{remark}
\newtheorem{remark}{Remark}

\usepackage{algorithm}
\usepackage[noend]{algpseudocode}

\usepackage{braket}
	% When writing ket then bra for a density or projector, there is an annoying space.
	% The following commands attempts to fix this.
	\renewcommand\bra[1]{{\langle{#1}|}}
	\renewcommand\ket[1]{{|{#1}\rangle}}

%%%%% %%%%% %%%%% %%%%% %%%%% %%%%% %%%%% %%%%% %%%%% %%%%% %%%%% %%%%%

% tables
\usepackage{multirow}
\usepackage{array}
    \newcolumntype{P}[1]{>{\centering\arraybackslash}p{#1}}
\newcommand{\rightvar}[1]{\multicolumn{1}{ r|}{ #1 }}

% figure
% following three packages for figure of distributions
%\usepackage{xcolor}
\usepackage{tikz}
\usepackage{pgfplots}
    \pgfplotsset{compat=1.14}

%%%%% %%%%% %%%%% %%%%% %%%%% %%%%% %%%%% %%%%% %%%%% %%%%% %%%%% %%%%%

% Fundamental
\def\C{\mathbb{C}}
\def\R{\mathbb{R}}
% Operators
\DeclareMathOperator{\sgn}{sgn}
\DeclareMathOperator{\tr}{tr}
\DeclareMathOperator{\supp}{supp}
\DeclareMathOperator{\wt}{wt}
% Probability
\def\E{\mathbb{E}} % Expectation
\DeclareMathOperator{\Var}{Var}

%%%%% %%%%% %%%%% %%%%% %%%%% %%%%% %%%%% %%%%% %%%%% %%%%% %%%%% %%%%%

\title[]{Measurements of Quantum Hamiltonians with Locally-Biased Classical Shadows}

\usepackage[foot]{amsaddr}
\author{Charles Hadfield}
\email{charles.hadfield@ibm.com}
\author{Sergey Bravyi}
\author{Rudy Raymond}
\author{Antonio Mezzacapo}
\email{antonio.mezzacapo@ibm.com}

\address[SB, CH, AM]{IBM Quantum, IBM T.J. Watson Research Center, Yorktown Heights, NY 10598}
\address[RR]{IBM Quantum, IBM Research -- Tokyo, 19-21 Nihonbashi Chuo-ku, Tokyo 103-8510}
\address[RR]{Quantum Computing Center, Keio University, Hiyoshi 3-14-1, Kohoku, Yokohama 223-8522}

%%%%% %%%%% %%%%% %%%%% %%%%% %%%%% %%%%% %%%%% %%%%% %%%%% %%%%% %%%%%

\begin{document}

\maketitle

\begin{abstract}
Obtaining precise estimates of quantum observables is a crucial step of variational quantum algorithms. We consider the problem of estimating expectation values of molecular Hamiltonians, obtained on states prepared on a quantum computer. We propose a novel estimator for this task, which is locally optimised with knowledge of the Hamiltonian and a classical approximation to the underlying quantum state. Our estimator is based on the concept of classical shadows of a quantum state, and has the important property of not adding to the circuit depth for the state preparation. We test its performance numerically for molecular Hamiltonians of increasing size, finding a sizable reduction in variance with respect to current measurement protocols that do not increase circuit depths.
\end{abstract}

\thispagestyle{empty}
%%%%% %%%%% %%%%% %%%%% %%%%% %%%%% %%%%% %%%%% %%%%% %%%%% %%%%% %%%%%

\begin{section}{Introduction}

Estimating observables of interest for quantum states prepared on quantum processor is a central subroutine in a variety of quantum algorithms. Improving the precision of the measurement process is a pressing need, considering the fast-paced increase in size of current quantum devices. One key application is the energy estimation of complex molecular Hamiltonians, a staple of variational quantum eigensolvers (VQE)~\cite{peruzzo2014variational,o2016scalable,Kandala2017Hardware-efficientMagnets,hempel2018quantum}. 
Readout of quantum information on quantum processors is available only through single-qubit projective measurements. The outcomes of these single-qubit measurements are combined to estimate quantum observables described by linear combinations of Pauli operators. Naively, each Pauli operator can be estimated independently by appending a quantum circuit composed of one layer of single-qubit gates at the end of state preparation, before readout. 

A series of recent efforts \cite{jena2019pauli,yen2020measuring,huggins2019efficient,gokhale2019minimizing,zhao2019measurement,ryabinkin2020iterative,crawford2019efficient,hamamura2019efficient} has shown that savings in the number of measurements can be obtained for the estimation of complex observables, at the expense of increasing circuit depths. This increase in circuit depth can defy the purpose of variational quantum algorithms, which aim to keep gate counts low~\cite{preskill2018quantum}.

Other strategies, more amenable to execution on near-term devices, have considered reducing the number of measurements while simultaneously not increasing circuit depth. These strategies are based on grouping together Pauli operators that can be measured in the same single-qubit basis. This Pauli grouping approach was introduced in~\cite{Kandala2017Hardware-efficientMagnets} and explored thoroughly in~\cite{verteletskyi2020measurement} for chemistry systems. Machine learning techniques have also recently been used to tackle the measurement problem~\cite{torlai2019precise}, with no increase in circuit depth. The machine learning approach is based upon the assumption that fermionic neural-network states can capture quantum correlations in ground states of molecular systems~\cite{choo2020fermionic}. 

The measurement problem has been considered in the context of predicting collections of generic observables on reduced density matrices~\cite{Huang2020PredictingMeasurements, bonet2019nearly, cotler2020quantum}. The best asymptotic scalings up to poly-logarithmic factors are obtained in \cite{Huang2020PredictingMeasurements}, where it is proposed to characterise a quantum state through random measurements, later used to retrieve arbitrary observables.

In this article we introduce an estimator that recovers, in expectation, mean values of observables on quantum states prepared on quantum computers.
The protocol is based on \emph{classical shadows using random Pauli measurements} introduced in \cite{Huang2020PredictingMeasurements} and referred to as \emph{classical shadows} in this present article. We show how sampling from random measurement bases in the original protocol can be locally biased towards certain bases on each individual qubit. We name this technique \emph{locally-biased classical shadows}. We show how to optimise the estimator's local bias on each qubit based on the knowledge of a target observable and a classical approximation of the quantum state, named \emph{reference state}. We also prove that this optimisation has a convex cost function in certain regimes. We benchmark our optimisation procedure in the setting of quantum chemistry Hamiltonians, where reference states can be obtained from the Hartree-Fock solution, or multi-reference states, obtained with perturbation theory. We finally compare the variance of our estimator to previous methods for estimating average values that do not increase circuit depth, obtaining consistent improvements.

\begin{subsection}*{Outline of the paper}
Section~\ref{sec:shadows_uniform} reviews classical shadows using random Pauli measurements in a notation convenient to this current article. Section~\ref{sec:shadows_biased} provides the construction of the locally-biased classical shadows and calculates the expectation and variance associated with the estimator introduced. Section~\ref{sec:energy} shows how to optimise the estimator. Section~\ref{sec:experiments} benchmarks our estimator for molecular energies on molecules of increasing sizes. Section~\ref{sec:conclusion} finishes with closing remarks. Appendix~\ref{app:comparative_algorithms} reviews the methods for molecular energy estimation to which we compare our estimator.
\end{subsection}

\begin{subsection}*{Acknowledgements}
We thank Giacomo Nannicini for useful discussions regarding the convexity of the cost functions introduced here.
SB acknowledges the support of the IBM Research Frontiers Institute.
\end{subsection}

\end{section}

\begin{section}{Classical Shadows Using Random Pauli Measurements}\label{sec:shadows_uniform}

Classical shadows using random Pauli measurements has been introduced in \cite{Huang2020PredictingMeasurements}. This section reproduces the procedure in a different style. Since we are only concerned with estimating one specific observable, we do not mention the aspect of a snapshot, nor the efficient description using the symplectic representation, nor the notion of median of means.

The problem that we want to address is the estimation of $\tr(\rho O)$ for a given $n$-qubit state $\rho$ and an observable $O$ decomposed as a linear combination of Pauli terms:
\begin{equation}
    O = \sum_{Q \in \{I,X,Y,Z\}^{\otimes n}}
            \alpha_Q Q
\end{equation}
where $\alpha_Q\in\R$. 
Notationally, for a Pauli operator $Q$ as above and for a given qubit $i\in\{1,2,\dots,n\}$ we shall write $Q_i$ for the $i$\textsuperscript{th} single-qubit Pauli operator so that $Q=\otimes_i Q_i$. We denote the support of such an operator
$\supp(Q) = \{i | Q_i \neq I\}$ and its weight $\wt(Q)= |\supp(Q)|$. An $n$-qubit Pauli operator
$Q$ is said to be full-weight if $\wt(Q)=n$.

%\todo[inline]{Do we need to assume $O$ is traceless.}
%For the purposes of Sections~\ref{sec:shadows_uniform} and~\ref{sec:shadows_biased} we will assume that $O$ is traceless, $\tr(O)=0$. This is done to avoid keeping track of the identity term in the subsequent equations. When we apply our technique to molecular Hamiltonians we will remove this assumption as it affects the alternative algorithms in Appendix~\ref{app:comparative_algorithms} against which we benchmark our technique.

This task of estimating $\tr(\rho O)$ is accomplished with classical shadows of~\cite{Huang2020PredictingMeasurements} as described in Algorithm~\ref{alg:pauli_shadows_uniform}. 
Briefly, one randomly selects a Pauli basis for each of the $n$-qubits in which to measure the quantum state; this is irrespective of the operator $O$. Then, after measurement, non-zero estimates can be provided for all Pauli operators which qubit-wise commute with the measurement bases. All other Pauli operators are implicitly provided with the zero estimator for their expectation values.

We introduce the function from \cite[Eq. E28]{Huang2020PredictingMeasurements}. For two $n$-qubit Pauli operators $P,Q$ define, for each qubit $i$,
\begin{equation}\label{eq:f_from_HKP}
    f_i(P,Q)
    =   
    \begin{cases}
        1 & \textrm{if $P_i=I$ or $Q_i=I$}; \\
        3 & \textrm{if $P_i=Q_i \neq I$}; \\
        0 & \textrm{else}.
    \end{cases}
\end{equation}
and extend this to the multi-qubit setting by declaring $f(P,Q) = \prod_{i=1}^n f_i(P,Q)$. Also, given a full-weight Pauli operator $P$, we let $\mu(P,i)\in\{\pm 1\}$ denote the eigenvalue measurement when qubit $i$ is measured in the $P_i$ basis. For a subset $A\subseteq\{1,2,\dots, n\}$ declare
\begin{equation}
	\mu(P,A) = \prod_{i\in A} \mu(P,i)
\end{equation}
with the convention that $\mu(P,\varnothing)=1$.

\begin{algorithm}
	\caption{Estimation of observable via (uniform) classical shadows}
	\label{alg:pauli_shadows_uniform}
	\begin{algorithmic}
		\For{sample $s \in \{1, 2, \dots, S\}$}
			\State Prepare $\rho$;
			\State Uniformly at random pick $P \in \{X,Y,Z\}^{\otimes n}$;
			\For{qubit $i\in\{1,2,\dots, n\}$}
				\State Measure qubit $i$ in $P_i$ basis providing eigenvalue measurement $\mu(P,i) \in \{\pm 1\}$;
			\EndFor
		\State Estimate observable expectation
			\[
			\nu^{(s)} = \sum_Q \alpha_Q f(P,Q) \mu(P,\supp(Q))
			\]
		\EndFor
		\Return $\nu = \frac 1 S \sum_s \nu^{(s)}$.
	\end{algorithmic}
\end{algorithm}
As shown in~\cite{Huang2020PredictingMeasurements}, the output of this algorithm is an unbiased estimator of the desired expectation value, that is, $\E(\nu)=\tr{(\rho O)}$.

\end{section}

\begin{section}{Locally-Biased Classical Shadows}\label{sec:shadows_biased}

In this section we generalise classical shadows by observing that the randomisation procedure of the Pauli measurements can be biased in the measurement basis for each qubit. We build an estimator based on biased measurements, which in expectation recovers $\tr(\rho O)$. We then proceed to calculate its variance.

As in Section~\ref{sec:shadows_uniform}, we wish to estimate $\tr(\rho O)$ for a given state $\rho$ and an observable  $O=\sum_Q \alpha_Q Q$. For each qubit $i \in \{1,2,\dots,n\}$, consider a probability distribution $\beta_i$ over $\{X,Y,Z\}$ and denote by $\beta_{i}(P_i)$ the probability associated with each Pauli $P_i\in\{X,Y,Z\}$. We write $\beta$ for the collection $\{ \beta_i \}_{i=1}^n$ and note that $\beta$ may be considered a probability distribution on full-weight Pauli operators by associating with $P\in \{X,Y,Z\}^{\otimes n}$ the probability $\beta(P) = \prod_i \beta_{i}(P_i)$.

We generalise the function introduced in Eq.~\eqref{eq:f_from_HKP}. For two $n$-qubit Pauli operators $P,Q$ and a product probability distribution $\beta$, define, for each qubit $i$,
\begin{equation}
    f_i(P,Q,\beta)
    =   
    \begin{cases}
        1 & \textrm{if $P_i=I$ or $Q_i=I$}; \\
        (\beta_i(P_i))^{-1} & \textrm{if $P_i=Q_i \neq I$}; \\
        0 & \textrm{else}.
    \end{cases}
\end{equation}
In the above, $(\beta_i(P_i))^{-1}$ ought be interpreted as $0$ if $\beta_i(P_i)$ vanishes. We extend this to the multi-qubit setting by declaring
\begin{equation}
    f(P,Q, \beta) 
    = 
    \prod_{i=1}^n f_i(P,Q,\beta).
\end{equation}

Algorithm~\ref{alg:pauli_shadows_biased} describes an estimator via locally-biased classical shadows. Note that the (uniform) classical shadows case is retrieved when $\beta_i(P_i) = \frac13$ for every qubit $i\in\{1,2,\dots, n\}$ and every Pauli term $P_i\in\{X,Y,Z\}$.

\begin{algorithm}
	\caption{Estimation of observable via locally-biased classical shadows}
	\label{alg:pauli_shadows_biased}
	\begin{algorithmic}
		\For{sample $s \in \{1,2, \dots, S\}$}
			\State Prepare $\rho$;
			\For{qubit $i\in \{1,2,\dots, n\}$}
				\State Randomly pick $P_i\in\{X,Y,Z\}$ from $\beta_i$-distribution;
				\State Measure qubit $i$ in $P_i$ basis providing eigenvalue measurement $\mu(P,i) \in \{\pm 1\}$;
			\EndFor
			\State Set $P=\otimes_{i=1}^n P_i$;
			\State Estimate observable expectation
			\[
			\nu^{(s)} = \sum_Q \alpha_Q f(P,Q, \beta) \mu(P,\supp(Q))
			\]
		\EndFor
		\Return $\nu = \frac 1 S \sum_s \nu^{(s)}$.
	\end{algorithmic}
\end{algorithm}
Algorithm~\ref{alg:pauli_shadows_biased} recovers the expectation $\tr(\rho O)$, as shown in the following lemma. 

\begin{lemma}\label{lem:estimator_calculations}
	The estimator $\nu$ from Algorithm~\ref{alg:pauli_shadows_biased} with a single sample $(S=1)$ satisfies
	\begin{equation}\label{eq:moments}
		\E(\nu) = \sum_Q \alpha_Q \tr(\rho Q)
		\qquad
		\textrm{and}
		\qquad
		\E(\nu^2) = \sum_{Q,R} f(Q,R,\beta) \alpha_Q\alpha_R \tr(\rho QR).
	\end{equation}
\end{lemma}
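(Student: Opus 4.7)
The plan is to prove both moment identities by linearity of expectation, in each case splitting the expectation into the random choice of the measurement basis $P$ (drawn from $\beta$) and the conditional expectation over the measurement outcomes $\mu(P,\cdot)$. The key measurement fact I will use is that if $P$ is full-weight and $A\subseteq\{1,\dots,n\}$, then $\E_\mu[\mu(P,A)]=\tr(\rho P_A)$, where $P_A$ denotes the Pauli that equals $P_i$ on $i\in A$ and $I$ elsewhere; combined with $\mu(P,i)^2=1$, this gives $\mu(P,\supp(Q))\mu(P,\supp(R))=\mu(P,\supp(Q)\triangle\supp(R))$.

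For $\E(\nu)$, I fix $Q$ and examine $\alpha_Q\,\E_P\E_\mu[f(P,Q,\beta)\mu(P,\supp(Q))]$. By the definition of $f_i$, the factor $f(P,Q,\beta)$ vanishes unless $P_i=Q_i$ for every $i\in\supp(Q)$; on this event, $\E_\mu[\mu(P,\supp(Q))]=\tr(\rho Q)$ (the off-support qubits being measured in arbitrary bases do not matter). On the same event, $\beta(P)f(P,Q,\beta)=\prod_{i\notin\supp(Q)}\beta_i(P_i)$, because the $\beta_i(Q_i)$ and $\beta_i(Q_i)^{-1}$ factors at sites in $\supp(Q)$ cancel. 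Summing the free coordinates $P_i$ for $i\notin\supp(Q)$ gives $1$, so the whole $Q$-term reduces to $\alpha_Q\tr(\rho Q)$, and summing over $Q$ yields the claimed first moment.

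For $\E(\nu^2)$, I square and expand to a double sum over $(Q,R)$. The product $f(P,Q,\beta)f(P,R,\beta)$ is nonzero only when $Q,R$ are \emph{compatible} in the sense that $Q_i=R_i$ whenever both are non-identity, and additionally $P_i=Q_i$ on $\supp(Q)$ and $P_i=R_i$ on $\supp(R)$. In this compatible case, the Pauli $QR$ has support exactly $\supp(Q)\triangle\supp(R)$, agrees with $P_i$ on that symmetric difference, so $\E_\mu[\mu(P,\supp(Q)\triangle\supp(R))]=\tr(\rho QR)$. It remains to evaluate $\sum_P\beta(P)f(P,Q,\beta)f(P,R,\beta)$; partitioning the qubits into $\supp(Q)\cap\supp(R)$, $\supp(Q)\triangle\supp(R)$, and their complement, the $\beta_i(P_i)$ exponent at each site is $-1$, $0$, and $+1$ respectively. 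Sites in the symmetric difference contribute $1$, sites outside the union sum to $1$, and sites in the intersection contribute $\beta_i(Q_i)^{-1}$; the resulting product is precisely $f(Q,R,\beta)$ (and is $0$ in the incompatible case, matching $f(Q,R,\beta)=0$ by definition).

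The hard part is purely bookkeeping in the second-moment step: correctly tracking the net power of $\beta_i(P_i)$ contributed by $\beta(P)$, $f(P,Q,\beta)$ and $f(P,R,\beta)$ at each qubit, and checking that the edge conventions (the case $\beta_i(P_i)=0$ being interpreted as $(\beta_i(P_i))^{-1}=0$) make the compatibility condition and the support-partitioning argument consistent. Once that accounting is carried out, both identities in~\eqref{eq:moments} follow directly, with no inequality or concentration argument required.
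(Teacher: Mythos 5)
Your proposal is correct and follows essentially the same route as the paper: factor the expectation as $\E=\E_P\,\E_{\mu(P)}$, use the symmetric-difference identity $\mu(P,\supp(Q))\mu(P,\supp(R))=\mu(P,\supp(Q)\oplus\supp(R))$ to evaluate the measurement averages as $\tr(\rho Q)$ and $\tr(\rho QR)$, and reduce the basis average to the identities $\E_P f(P,Q,\beta)=1$ and $\E_P f(P,Q,\beta)f(P,R,\beta)=f(Q,R,\beta)$. The only difference is that you carry out the qubit-by-qubit exponent bookkeeping for these two identities explicitly, which the paper leaves as an easy check.
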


\begin{proof}
Let $\E_P$ denote the expected value over the distribution $\beta(P)$.
Let $\E_{\mu(P)}$ denote the expected value over the measurement outcomes
for a fixed Pauli basis $P$.
Using the fact that $\beta(P)$ is a product distribution one can easily check that 
\begin{equation}\label{eq:meanP1}
	\E_P f(P,Q,\beta) = 1,
\end{equation}
\begin{equation}\label{eq:meanP2}
	\E_P f(P,Q,\beta) f(P,R,\beta)  = f(Q,R,\beta)
\end{equation}
for any $Q,R\in \{I,X,Y,Z\}^{\otimes n}$.

Let us say that an $n$-qubit Pauli operator $Q$ agrees with a basis $P\in \{X,Y,Z\}^{\otimes n}$
iff $Q_i\in \{I,P_i\}$ for any qubit $i$.
Note that $f(P,Q,\beta)=0$ unless $Q$ agrees with $P$.
For any $n$-qubit Pauli operators $Q,R$ that agree with a basis $P$ one has
\begin{equation}\label{eq:meanM1}
	\E_{\mu(P)} \mu(P,\mathrm{supp}(Q)) =\tr{(\rho Q)}
\end{equation}
and
\begin{equation}\label{eq:meanM2}
	\E_{\mu(P)} \mu(P,\mathrm{supp}(Q)) \mu(P,\mathrm{supp}(R))=\tr{(\rho QR)}.
\end{equation}
To get the last equality, observe that $\mu(P,A)\mu(P,A')=\mu(P,A\oplus A')$
for any subsets of qubits $A,A'$, where $A\oplus A'$ is the symmetric difference of
$A$ and $A'$. The assumption that both $Q$ and $R$ agree with the same basis $P$
implies that $\mathrm{supp}(Q)\oplus \mathrm{supp}(R)=\mathrm{supp}(QR)$.
Now Eq.~\eqref{eq:meanM2} follows from Eq.~\eqref{eq:meanM1}.

By definition, the expected value in Eq.~\eqref{eq:moments} is a composition
of the expected values over a Pauli basis $P$ and over the measurement outcomes $\mu(P)$, that is,
$\E = \E_P \E_{\mu(P)}$.
Using the above identities one gets  
\[
\E(\nu)=  \E_P \E_{\mu(P)} \nu = \sum_Q \alpha_Q \tr{(\rho Q)} \E_P f(P,Q,\beta)
= \sum_Q \alpha_Q \tr{(\rho Q)}
\]
Here the second equality is obtained using Eq.~\eqref{eq:meanM1}
and the linearity of expected values. 
The third equality follows from Eq.~\eqref{eq:meanP1}.
Likewise,
\[
\E(\nu^2) =  \E_P \E_{\mu(P)} \nu^2 = \sum_{Q,R} \alpha_Q \alpha_R \tr{(\rho QR)} \E_P f(P,Q,\beta) f(P,R,\beta)
=  \sum_{Q,R} \alpha_Q \alpha_R  f(Q,R,\beta) \tr{(\rho QR)}.
\]
Here the second equality is obtained using Eq.~\eqref{eq:meanM2} and observing that
that $f(P,Q,\beta)f(P,R,\beta)=0$ unless both $Q$ and $R$ agree with $P$.
The third equality follows from Eq.~\eqref{eq:meanP2}.
\end{proof}

Recall that in the context of using a quantum processor, we aim to use the random variable $\nu$ to estimate $\tr(\rho O)$ to some (additive) precision $\varepsilon$. This dictates the number of samples $S$ required. Specifically, for fixed $\rho,O$, we require $S=O(\varepsilon^{-2} \Var(\nu^{(s)}))$ where $\Var(\nu^{(s)})$ is obviously independent of the specific sample $s$. For future reference, we record explicitly the variance of $\nu$ for a single sample $(S=1)$. Lemma~\ref{lem:estimator_calculations} establishes
\begin{equation}\label{eq:var_pauli_biased}
	\Var(\nu) = \left(\sum_{Q,R} f(Q,R,\beta) \alpha_Q \alpha_R \tr(\rho Q R)\right) - \left(\tr(\rho O) \right)^2.
\end{equation}
\begin{remark}
In \cite[Proposition 3]{Huang2020PredictingMeasurements}, the authors aim to upper-bound this variance independently of the state $\rho$. In the uniform setting, this is achieved with an application of Cauchy-Schwarz and it leads to a bound of $4^k \|O\|^2_\infty$ where $k$ is the weight of the operator.
\end{remark}

\end{section}

\begin{section}{Optimised Locally-Biased Classical Shadows}\label{sec:energy}

In this section we show how the locally-biased classical shadows introduced in Section~\ref{sec:shadows_biased} can be optimised when one has partial knowledge about the underlying quantum state. This partial information is obtained efficiently with a classical computation. This is the case of VQE for molecular Hamiltonians if one initialises the variational procedure from a reference state, which can be, for example, the Hartree-Fock solution, a generic fermionic Gaussian state~\cite{dallaire2019low}, or perturbative M{\o}ller-Plesset solutions. Our method can be also extended to generic many-body Hamiltonians, considering for example the generation of reference states with semidefinite programming~\cite{bravyi2019approximation}. On a more general note, the existence of a good reference state is the assumption of all algorithms that target ground state properties of interacting many-body problems, including quantum phase estimation. 

In this setting, we optimise the probability distributions $\beta = \{\beta_i\}_{i=1}^n$ to obtain the smallest variance on a given reference state. To do this, we consider the variance calculated in Eq.~\eqref{eq:var_pauli_biased} and extract from it the component which, associated with the reference state, explicitly depends on the distributions $\beta$. We proceed to optimise this cost function, thereby minimising the variance, noting that a negligible restriction of the cost function that we use leads to a convex optimisation problem. Finally, we use the optimised distributions $\beta^*$ to build molecular energy estimators as defined in Algorithm~\ref{alg:pauli_shadows_biased}. 

To set notation, we introduce a molecular Hamiltonian, $H$, acting on $n$ qubits. We write
\begin{align}
    H = \sum_{P \in \{I,X,Y,Z\}^{\otimes n}}
            \alpha_P P
\end{align}
and denote by $H_0$ the traceless part of $H$.

\begin{subsection}{Single-reference optimisation}

We first consider the case in which the reference state is a single logical basis state. This is the case if a VQE targeting a molecular Hamiltonian in the molecular basis is initialised with the Hartree-Fock state. Motivated by this, we use the label ``HF" to indicate the single logical basis state. However we remark that the results here can be generalised out of the quantum chemistry domain.
We are given a reference product state $\rho_{\mathrm{HF}} = \frac1{2^n}\otimes_{i=1}^n (I+m_i Z)$ where $m_i\in\{\pm1\}$.
The variance of the estimator $\nu$ is independent of the constant term $H-H_0$. Writing the variance from Eq.~\eqref{eq:var_pauli_biased} for the state $\rho_{\mathrm{HF}}$ upon explicit removal of the constant term reads
\begin{equation}\label{eq:var_given_hf}
    \Var(\nu | \rho_\mathrm{HF})
    =
    \sum_{Q \neq I^{\otimes n}}
    \sum_{R \neq I^{\otimes n}}
        f(Q,R,\beta)
        \alpha_Q \alpha_R
        \tr(\rho_{\mathrm{HF}}QR)
    - \tr(\rho_{\mathrm{HF}} H_0 )^2.
\end{equation}
Our objective is to find probability distributions $\beta$ so as to minimise Eq.~\eqref{eq:var_given_hf}. The following proposition explicits the relevant cost function appropriate to this task.

\begin{proposition}
Given a reference product state $\rho_{\mathrm{HF}}$, represented by the logical basis element $\{m_i\}_{i=1}^n$, the variance associated with Algorithm~\ref{alg:pauli_shadows_biased} is minimised upon choosing $\beta$ so as to minimise
\begin{align}\label{eq:cost}
    \mathrm{cost}(\beta)
    =
    \sum_{(Q,R)\in\mathcal{I}_{Z^{\otimes n}}}
        \alpha_Q
        \alpha_R
        \prod_{i | Q_i=R_i\neq I}
            (\beta_i(Q_i))^{-1}
        \prod_{i | Q_i\neq R_i}
            m_i
\end{align}
subject to $\beta_{i,P} \ge 0$ and $\beta_{i,X}+\beta_{i,Y}+\beta_{i,Z}=1$ for all $i$. In the above, the sum is taken over ``influential pairs":
\begin{align}
    \mathcal{I}_{Z^{\otimes n}}
    =
    \left\{\left.
        (Q,R)
        \,\right|\,
        \textrm{$Q,R\neq I^{\otimes n}$ and for all $i$, either $Q_i=R_i$, or $\{Q_i,R_i\}=\{I,Z\}$}
    \right\}
\end{align}
\end{proposition}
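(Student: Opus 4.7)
The plan is to massage the variance formula \eqref{eq:var_given_hf} into the stated cost by (i) discarding the $\beta$-independent constant, (ii) exploiting the tensor-product structure of $\rho_{\mathrm{HF}}$ to factorise $\tr(\rho_{\mathrm{HF}} QR)$ qubit by qubit, and (iii) intersecting the support of $f(Q,R,\beta)$ with the support of $\tr(\rho_{\mathrm{HF}} QR)$ to show that only pairs in $\mathcal{I}_{Z^{\otimes n}}$ can contribute.

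First, I would note that $\tr(\rho_{\mathrm{HF}} H_0)^2$ does not depend on $\beta$, so minimising $\Var(\nu\mid\rho_{\mathrm{HF}})$ reduces to minimising the double sum over $Q,R\neq I^{\otimes n}$ in \eqref{eq:var_given_hf}. The constraints $\beta_{i,P}\geq 0$ and $\beta_{i,X}+\beta_{i,Y}+\beta_{i,Z}=1$ come for free from the requirement that each $\beta_i$ be a probability distribution.

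Next, since $\rho_{\mathrm{HF}}=\bigotimes_i\tfrac12(I+m_i Z)$, the trace splits as $\tr(\rho_{\mathrm{HF}}QR)=\prod_i \tfrac12\tr\bigl((I+m_iZ)(QR)_i\bigr)$, and $f(Q,R,\beta)$ splits by definition. I would then run a short case analysis per qubit to determine when the product $f_i(Q,R,\beta)\cdot\tfrac12\tr\bigl((I+m_i Z)(QR)_i\bigr)$ is nonzero. The condition $f_i\neq 0$ forces $Q_i=I$, $R_i=I$, or $Q_i=R_i$; the trace condition forces $(QR)_i\in\{\pm I,\pm Z,\pm iI,\pm iZ\}$. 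Intersecting these two requirements leaves exactly four sub-cases: (a) $Q_i=R_i=I$ contributing $1$; (b) $Q_i=R_i\neq I$ contributing $(\beta_i(Q_i))^{-1}$; and (c)/(d) $\{Q_i,R_i\}=\{I,Z\}$ contributing $m_i$. In particular, potentially worrying cross-cases like $Q_i=X,R_i=Y$, where the trace is nonzero, are killed by $f_i=0$; and cases like $Q_i=I,R_i=X$, where $f_i=1$, are killed by the trace.

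Taking the product over qubits, the surviving pairs are exactly those in $\mathcal{I}_{Z^{\otimes n}}$, and the per-term weight collapses to
\[
f(Q,R,\beta)\,\tr(\rho_{\mathrm{HF}}QR)=\prod_{i\,:\,Q_i=R_i\neq I}(\beta_i(Q_i))^{-1}\prod_{i\,:\,Q_i\neq R_i}m_i,
\]
which when multiplied by $\alpha_Q\alpha_R$ and summed reproduces $\mathrm{cost}(\beta)$ in \eqref{eq:cost}. I do not expect a serious obstacle: the argument is entirely bookkeeping. The only subtle point is the per-qubit case analysis, where one must simultaneously track the support of $f$ and the support of the $\rho_{\mathrm{HF}}$-trace to avoid double-counting or overlooking the $\{I,Z\}$ pairs that produce the $m_i$ factors.
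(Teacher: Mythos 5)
Your proposal is correct and follows essentially the same route as the paper's proof: factorise $f(Q,R,\beta)\tr(\rho_{\mathrm{HF}}QR)$ qubit by qubit, run the per-qubit case analysis on the supports of $f_i$ and of the single-qubit trace, and observe that only pairs in $\mathcal{I}_{Z^{\otimes n}}$ contribute $\beta$-dependent (indeed nonzero) terms, with per-term weight $\prod_{i:Q_i=R_i\neq I}(\beta_i(Q_i))^{-1}\prod_{i:Q_i\neq R_i}m_i$. The bookkeeping, including the observation that the $Q_i=X$, $R_i=Y$ case is killed by $f_i$ while the $Q_i=I$, $R_i\in\{X,Y\}$ case is killed by the trace, matches the paper's argument.
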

\begin{proof}
We must pay attention to only $\beta$-dependent terms in $\Var(\nu | \rho_\mathrm{HF})$.
The simple structure of $\rho_{\mathrm{HF}}$ implies, for $n$-qubit non-identity Pauli operators $Q,R$,
\begin{align*}
    f(Q, R, \beta)\tr(\rho_{\mathrm{HF}} Q R)
    &=
    \prod_{i=1}^n 
        f_i(Q, R, \beta) 
        \tr\left(
                \frac12(I+m_iZ)Q_iR_i
            \right)
    \\
    &=
    \prod_{i=1}^n
    f_i(Q, R, \beta)\delta_{Q_i,R_i}
    +
    m_i\left(
        \delta_{Q_i,Z}\delta_{R_i,I}
        +
        \delta_{Q_i,I}\delta_{R_i,Z}
    \right)
    \\
    &=
    \prod_{i=1}^n
    \delta_{Q_i,R_i}
    \left(
        \delta_{Q_i,I} + (1-\delta_{Q_i,I})(\beta_i(Q_i))^{-1}
    \right)
    +
    m_i\left(
        \delta_{Q_i,Z}\delta_{R_i,I}
        +
        \delta_{Q_i,I}\delta_{R_i,Z}
    \right)
\end{align*}
The preceding display is independent of $\beta$ whenever $(Q,R)\not\in \mathcal{I}_{Z^{\otimes n}}$. Hence the cost function captures precisely the component of the variance (when estimating the reference product state) which is dependent on the probability distributions $\beta$.
\end{proof}

Some remarks are in order.
\begin{remark}
The cost function of Eq.~\eqref{eq:cost} is \emph{not} convex. If we however restrict to diagonal terms from the set of influential pairs, then we obtain the following alternative cost function, which we refer to as the \emph{diagonal} cost function:
\begin{align}\label{eq:cost_diag}
    \mathrm{cost}_\mathrm{diag}(\beta)
    =
    \sum_Q \alpha_Q^2 \prod_{i\in\supp(Q)} (\beta_i(Q_i))^{-1}
\end{align}
This diagonal cost function is convex: For fixed $Q$, the function $-\log \beta_i(Q_i)$ is convex, hence so too is $\sum_{i\in\supp(Q)} (-\log(\beta_i(Q_i)))$. Exponentiating this result implies $\prod_{i\in\supp(Q)} \beta_i(Q_i)^{-1}$ is convex. The positive linear combination over Pauli operators $Q$ preserves convexity. In this convex case we are assured that the minimised collection of distributions provides a global minimum (of the diagonal cost function).

This diagonal cost function makes no reference to the specific single-reference state, and therefore can be used to find optimal $\beta_i$ which are independent of the underlying quantum state $\rho_\textrm{HF}$.
In fact, the diagonal cost function can be derived from Eq.~\eqref{eq:var_pauli_biased} when $\rho$ is the maximally mixed state. 
We also find that the diagonal cost function does however numerically give very satisfying results. We explain this in the following paragraph by relating the two cost functions.

Set 
$\Gamma(P,Q)
=
|\alpha_P|\,|\alpha_Q|\,
\prod_{i|P_i = Q_i \neq I} 
(\beta_i(P_i))^{-1} 
$. 
The diagonal cost function is $\sum_{Q} \Gamma(Q,Q)$, while the original cost function is at most $\sum_{P,Q} \Gamma(P,Q)$, (the sum is over only influential pairs in the original cost function). By definition, the following inequality holds
\begin{align}
    \Gamma(P,Q) 
    \le 
    \frac{\Gamma(P, P) + \Gamma(Q, Q)}{2}.
\end{align}
Summation over all pairs $(P,Q)$ on both sides leads to
\begin{align}
    \sum_{P,Q} \Gamma(P,Q) 
    \le 
    |H_0|
    \sum_{Q} \Gamma(Q,Q)
\end{align}
where $|H_0|$ is the number of traceless terms in the Hamiltonian. The preceding display upper-bounds the original cost function, therefore minimising the diagonal cost function implies minimising the original cost function per Pauli in the Hamiltonian.

%Moreover, when optimising $\{\beta_i\}_{i=1}^n$, we may use this function to find initial potential values for the distributions.
\end{remark}

\begin{remark}
The diagonal cost function can be formulated in the language of geometric programming \cite{Boyd2007}, while the original cost function is an example of signomial geometric programming. 
\end{remark}

\begin{remark} We solve these optimisation problems using the method of Lagrange multipliers. Specifically given current values $\beta^{(t)}(P)$ and update step-size $\Delta\in(0,1)$, we may update iteratively:
\begin{align}
    \beta^{(t+1)}(P) = (1-\Delta) \beta^{(t)}(P) + \Delta \beta^\mathrm{closed}(P)
\end{align}
where the closed-form Lagrange equations (detailed below) are calculated using values of $\beta^{(t)}(P)$ and must hold at optimality.
The closed-form equations for the diagonal cost function of Eq.~\eqref{eq:cost_diag} are
\begin{align}
    \beta_{i}(P_i) 
    = 
    \frac{  \sum_{Q | Q_i = P_i} 
                \alpha_Q^2 
                \prod_{j \in \supp(Q)} 
                    \beta_j(Q_j)^{-1} 
    }{      \sum_{Q | Q_i \neq I} 
                \alpha_Q^2 
                \prod_{j \in \supp(Q)} 
                    \beta_j(Q_j)^{-1}
    }
\end{align}
while for the original cost function of Eq.~\eqref{eq:cost}, they read
\begin{align}
    \beta_i(P_i) 
    = 
    \frac{  \sum_{   (Q,R) 
                    \in 
                    \mathcal{I}_{Z^{\otimes n}}
                    |
                    Q_i = R_i = P_i}
                \alpha_Q \alpha_R
                \prod_{j|Q_j = R_j \neq I} 
                    \beta_j(Q_j)^{-1} 
                \prod_{j|Q_j \neq R_j} 
                    m_j
    }{      \sum_{   (Q,R) 
                    \in 
                    \mathcal{I}_{Z^{\otimes n}}
                    |
                    Q_i = R_i \neq I} 
                \alpha_Q \alpha_R
                \prod_{j|Q_j = R_j \neq I} 
                    \beta_j(Q_j)^{-1} 
                \prod_{j|Q_j \neq R_j} 
                    m_j
    }
\end{align}

The iterative updates find optimal probability distributions for the diagonal cost function because at every iteration the constraints on $\beta$ are always satisfied whenever initialisation occurs with a random collection of probability distributions.
\end{remark}

\end{subsection}

\begin{subsection}{Multi-reference optimisation}

We finish this section by observing that the technique of optimising the probability distributions also works for multi-reference frame states such as fermionic Gaussian states, or perturbative solutions. Specifically, consider a multi-reference state $\ket{\psi}_\textrm{MR}$, written in the logical basis
\begin{align}
    \ket{\psi}_\textrm{MR} = \sum_{k=1}^K \lambda_k  \ket{\psi^{(k)}},
    \qquad
    \ket{\psi^{(k)}} = \ket{ b_1^{(k)} \cdots b_n^{(k)} }
\end{align}
where $b_i^{(k)}\in\{0,1\}$ are associated with $Z$-eigenvalues $m_i^{(k)}=(-1)^{b_i^{(k)}}$ and $\lambda_k\in\C$ are amplitudes such that $\ket{\psi}_\textrm{MR}$ is normalised. The associated density now reads
\begin{align}
    \rho_\textrm{MR}
    =
    \sum_{k,\ell}
        \lambda_k \overline{\lambda_\ell}
        \rho^{(k,\ell)},
    \qquad
    \rho^{(k,\ell)} 
    = 
    \otimes_{i=1}^n \ket{b_i^{(k)}} \bra{b_i^{(\ell)}}
\end{align}
In the following paragraphs, we calculate an appropriate cost function for this case.

Let us restrict ourselves to the single-qubit setting briefly:
$\rho^{(k,\ell)} = \ket{b^{(k)}} \bra{b^{(\ell)}}$.
There are two cases for $\rho^{(k,\ell)}$ dependent on whether $b^{(k)}, b^{(\ell)}$ agree or not. If they agree then 
$\rho^{(k,\ell)} = \frac12(I + (-1)^{b^{(k)}}Z)$.
If they disagree, then 
$\rho^{(k,\ell)} = \frac12(X + (-1)^{b^{(k)}}iY)$.
In a similar way to the single-reference setting, we need to calculate $f(Q,R,\beta) \tr(\rho^{(k,\ell)}QR)$. This is best done by considering the two cases:
We introduce the function $g$ when $b^{(k)}= b^{(\ell)}$ and obtain
\begin{align*}
    f(Q,R,\beta) \tr(\rho^{(k,\ell)}QR)
    &=
        \delta_{Q,R}
        \left(
            \delta_{Q,I} + (1-\delta_{Q,I})\beta_{Q}^{-1}
        \right)
        +
        (-1)^{b^{(k)}}
        \left(
        \delta_{Q,Z}\delta_{R,I}
        +
        \delta_{Q,I}\delta_{R,Z}
        \right)
    \\
    &= g(Q,R,\beta, b^{(k)});
\end{align*}
We introduce the function $h$ when $b^{(k)}\neq b^{(\ell)}$ and obtain
\begin{align*}
    f(Q,R,\beta) \tr(\rho^{(k,\ell)}QR)
    &=
        \left(
            \delta_{Q,X}\delta_{R,I} + \delta_{Q,I} \delta_{R,X}
        \right)
        +(-1)^{b^{(k)}} i
        \left(
            \delta_{Q,Y}\delta_{R,I} + \delta_{Q,I} \delta_{R,Y}
        \right)
    \\
    &= h(Q,R, b^{(k)}).
\end{align*}

We can now return to the multi-qubit setting to write down a cost function which ought be minimised:
\begin{align}
    \mathrm{cost}_\textrm{multi-ref}(\beta)
    =
    \sum_{k,\ell} 
        \lambda_k \overline{\lambda_\ell}
        \sum_{Q,R}
            \left(
            \prod_{i | b_i^{(k)} = b_i^{(\ell)}}
                g(Q_i,R_i,\beta_i, b_i^{(k)})
            \prod_{i | b_i^{(k)} \neq b_i^{(\ell)}}
                h(Q_i,R_i, b_i^{(k)})
            \right)
\end{align}

\end{subsection}

\end{section}

\begin{section}{Numerical experiments on molecular Hamiltonians}\label{sec:experiments}

In this section we test numerically the locally-biased classical shadows (LBCS) estimator defined in Algorithm~\ref{alg:pauli_shadows_biased} for molecular Hamiltonians. We consider six Hamiltonians corresponding to different molecules, represented in a minimal STO-3G basis, ranging from 4 to 16 spin orbitals. (The 8 qubit H$_2$ example uses a 6-31G basis.) We map the molecular Hamiltonians to qubit ones, using three encodings detailed in~\cite{bravyi2017tapering}. The result is qubit Hamiltonians defined on up to 16 qubits. The molecular Hamiltonians are defined in the molecular basis. In this basis, the Hartree-Fock state is a computational basis state. We choose the Hartree-Fock state as our single-reference state, and optimise the distributions $\beta$ according to Eq.~\eqref{eq:cost} and Eq.~\eqref{eq:cost_diag} separately. We call the optimisation procedure of the $\beta$ according to Eq. \eqref{eq:cost_diag} \emph{diagonal}. We then use the optimised $\beta^*$ to compute the  variance Eq.~\eqref{eq:var_pauli_biased} on the ground state of the molecular Hamiltonians; the ground state and the ground energy are obtained by the Lanczos method for sparse matrices. We report the results in Table~\ref{tab:algorithms}. 
In this table, we compare variances obtained with our LBCS estimator against other previously known observable estimators that do not increase circuit depth:
\begin{itemize}
    \item An estimator based on $\ell^1$ sampling of the Hamiltonian, detailed in~\cite{Wecker2015ProgressAlgorithms,arrasmith2020operator}. 
    \item An estimator which measures together collections of qubit-wise commuting Pauli operators. To find the collections of Pauli operators, we use a largest degree first (LDF) heuristic \cite{WelshPowell67}. The collections are then sampled according to their Hamiltonian $\ell^1$ weights.
    \item  Classical shadows as given in~\cite{Huang2020PredictingMeasurements}, which corresponds to the case $\beta_i(P_i)=\frac13$ for any qubit $i$ and Pauli term $P_i\in\{X,Y,Z\}$.
\end{itemize}
Details of the first two estimators may be found in Appendix~\ref{app:comparative_algorithms}.\footnote{Code is available upon request.}
For all the estimators, we report variances exactly computed on the ground states of the Hamiltonians considered.

\begin{table}[!ht]
    
    \caption{\label{tab:algorithms}Variance for various estimators considered in this work. LBCS is optimised according to Eq.~\eqref{eq:cost}, while the diagonal cost function is defined in Eq.~\eqref{eq:cost_diag}.}
    
    \begin{tabular}{| c |  c|  l| 
                           |P{2cm} | }
        
        \hline
        Molecule & Qubits &% Ground Energy & 
        %\multicolumn{2}{c|}{Variance} \\
        Estimator & {Variance} \\
        \hline
        
        \hline
        \multirow{5}{*}{H$_2$} & \multirow{5}{*}{4} &% -1.86 &
        
        $\ell^1$ sampling 
        & \rightvar{ 2.49\hphantom{0} }
        \\&&%&
        LDF grouping 
        & \rightvar{ 0.402 }
        \\&&%&
        classical shadows  
        & \rightvar{ 1.97\hphantom{0} }
        \\&&%&
        LBCS 
        & \rightvar{ 1.86\hphantom{0} }
        \\&&%&
        LBCS (diagonal cost function) 
        & \rightvar{ 1.86\hphantom{0} }
        \\
        
        \hline
        \multirow{5}{*}{H$_2$} & \multirow{5}{*}{8} &% -1.86 &
        
        $\ell^1$ sampling 
        & \rightvar{ 120\hphantom{.000} }
        \\&&%&
        LDF grouping 
        & \rightvar{ 22.3\hphantom{00} }
        \\&&%&
        classical shadows  
        & \rightvar{ 51.4\hphantom{00} }
        \\&&%&
        LBCS 
        & \rightvar{ 17.5\hphantom{00} }
        \\&&%&
        LBCS (diagonal cost function) 
        & \rightvar{ 17.7\hphantom{00} }
        \\
        
        \hline
        \multirow{5}{*}{LiH} & \multirow{5}{*}{12} &% -8.91 &
        
        $\ell^1$ sampling 
        & \rightvar{ 138\hphantom{.000} }
        \\&&%&
        LDF grouping 
        & \rightvar{ 54.2\hphantom{00} }
        \\&&%&
        classical shadows  
        & \rightvar{ 266\hphantom{.000} }
        \\&&%&
        LBCS 
        & \rightvar{ 14.8\hphantom{00} }
        \\&&%&
        LBCS (diagonal cost function) 
        & \rightvar{ 14.8\hphantom{00} }
        \\
        
        \hline
        \multirow{5}{*}{BeH$_2$} & \multirow{5}{*}{14} &% -19.0\hphantom{0} &
        
        $\ell^1$ sampling 
        & \rightvar{ 418\hphantom{.000} }
        \\&&%&
        LDF grouping 
        & \rightvar{ 135\hphantom{.000} }
        \\&&%&
        classical shadows  
        & \rightvar{ 1670\hphantom{.000} }
        \\&&%&
        LBCS 
        & \rightvar{ 67.6\hphantom{00} }
        \\&&%&
        LBCS (diagonal cost function) 
        & \rightvar{ 67.6\hphantom{00} }
        \\
        
        \hline
        \multirow{5}{*}{H$_2$O} & \multirow{5}{*}{14} &% -83.6\hphantom{0} &
        
        $\ell^1$ sampling 
        & \rightvar{ 4360\hphantom{.000} }
        \\&&%&
        LDF grouping 
        & \rightvar{ 1040\hphantom{.000} }
        \\&&%&
        classical shadows  
        & \rightvar{ 2840\hphantom{.000} }
        \\&&%&
        LBCS 
        & \rightvar{ 257\hphantom{.000} }
        \\&&%&
        LBCS (diagonal cost function) 
        & \rightvar{ 257\hphantom{.000} }
        \\
        
        \hline
        \multirow{5}{*}{NH$_3$} & \multirow{5}{*}{16} &% -66.9\hphantom{0} &
        
        $\ell^1$ sampling 
        & \rightvar{ 3930\hphantom{.000} }
        \\&&%&
        LDF grouping 
        & \rightvar{ 891\hphantom{.000} }
        \\&&%&
        classical shadows  
        & \rightvar{ 14400\hphantom{.000} }
        \\&&%&
        LBCS 
        & \rightvar{ 353\hphantom{.000} }
        \\&&%&
        LBCS (diagonal cost function) 
        & \rightvar{ 353\hphantom{.000} }
        \\
        
        \hline
    \end{tabular}
    
\end{table}

In all but one experiment of Table~\ref{tab:algorithms}, we observe that the LBCS estimator outperforms the other estimators. The one case where the LDF decomposition provides a lower variance -- H$_2$ on a minimal basis -- should be considered a curiosity due to the small qubit count.

We also plot in Figure~\ref{fig:distributions} an optimised distribution $\beta^*$. Specifically, we take the example of H$_2$O on 14 qubits in the Jordan-Wigner encoding. Due to the symmetry \cite{bravyi2017tapering} where the first 7 qubits correspond to spin-up orbitals, and the last 7 qubits correspond to spin-down orbitals, we observe that $\beta_{i}^* = \beta_{i+7}^*$ for $i\in\{1,2,\dots, 7\}$. Note also that the probabilities are symmetric in $X$ and $Y$ (which is not the case for the Bravyi-Kitaev encoding).

%\definecolor{Xprob}{HTML}{D7191C} %requires xcolor package
%\definecolor{Yprob}{HTML}{FDAE61}
%\definecolor{Zprob}{HTML}{ABDDA4}

\begin{figure}\label{fig:distributions}
\centering
\begin{tikzpicture}
\begin{axis}[
    xbar stacked,
    legend style={  legend columns=3,
                    at={(0.83,-0.055)},
                    draw=none},
    cycle list={    {fill=gray!80,draw=black!50},
                    {fill=gray!50,draw=gray!50},
                    {fill=gray!20,draw=gray!50}},
    ytick=data,
    axis y line*=left,
    axis x line*=bottom,
    tick label style={font=\footnotesize},
    legend style={font=\footnotesize},
    label style={font=\footnotesize},
    xtick={0,0.2,0.4,0.6,0.8,1.0},
    width=0.95\textwidth,
    bar width=6mm,
    xlabel={Probabilities},
    ylabel style={rotate=-90},
    ylabel={Qubit},
    yticklabels={1,2,3,4,5,6,7},
    xmin=0,
    xmax=1,
    area legend,
    y=1cm,
    enlarge y limits={abs=0.6},
]
\addplot coordinates
{(0.4325758758615219, 7)
 (0.24098406063295502, 6)
 (0.13870757752416293, 5)
 (0.13958006241420792, 4)
 (0.16985394780599872, 3)
 (0.267529848293483, 2)
 (0.1164780079544088, 1)};

\addplot coordinates
{(0.4325758758615219, 7)
 (0.24098406063295502, 6)
 (0.13870757752416293, 5)
 (0.13958006241420792, 4)
 (0.16985394780599872, 3)
 (0.267529848293483, 2)
 (0.1164780079544088, 1)};

\addplot coordinates
{(0.13484824827695654, 7)
 (0.5180318787340902, 6)
 (0.7225848449516744, 5)
 (0.7208398751715843, 4)
 (0.6602921043880026, 3)
 (0.4649403034130343, 2)
 (0.7670439840911827, 1)};

\legend{$X$, $Y$, $Z$}
\end{axis}  
\end{tikzpicture}

\caption{Probability distributions over the first 7 of 14 qubits for H$_2$O Hamiltonian using the Jordan-Wigner encoding. The probability distributions have been optimised according to Eq.~\eqref{eq:cost_diag}.}
\end{figure}
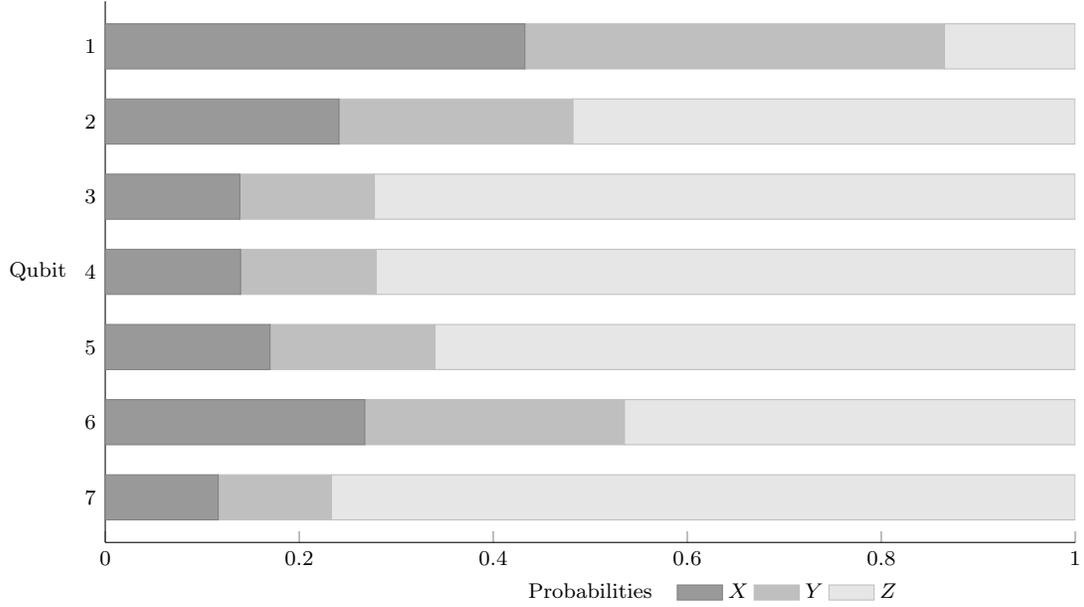

We next analyse the role played by the specific fermionic encoding used. For a restricted set of Hamiltonians, Table~\ref{tab:encodings} reports variances for the three estimators: LDF grouping; classical shadows; and LBCS, with three different fermion-to-qubit encodings: Jordan-Wigner; parity; and Bravyi-Kitaev. Note that the variances for parity and Bravyi-Kitaev mappings are higher because those mappings generate Pauli distributions that tend to have more $X$ and $Y$ operators, as opposed to the linear tail of $Z$ operators of the Jordan-Wigner, against which the distributions $\beta$ can be easily biased. We do not report $\ell^1$ sampling in Table~\ref{tab:encodings}, as it is invariant under choice of encoding. Irrespective of the encoding, the locally biased classical shadows shows a reduction in variance over the LDF grouping whose collections are sampled according to their 1-norm.

\begin{table}[!ht]
        
    \caption{\label{tab:encodings}Variance for different estimators computed on the ground states of the molecules indicated. LBCS is optimized with the cost function defined in Eq.~\eqref{eq:cost_diag}.}    
        
    \begin{tabular}{|c | l ||  P{3cm} | P{3cm} | P{3cm} |}
        \hline
        && \multicolumn{3}{c|}{Variance} \\
        \cline{3-5}
        Molecule & Encoding 
        & {LDF grouping}
        & {Classical shadows} %\multicolumn{1}{c|}
        & {LBCS} \\
        \hline
        \hline
        \multirow{3}{*}{H$_2$ (4 qubits)} & JW
        & \rightvar{     0.402       }
        & \rightvar{     1.97\hphantom{0}        }
        & \rightvar{     1.86\hphantom{0}        }
        \\
        & Parity
        & \rightvar{     0.193        }
        & \rightvar{     4.00\hphantom{0}        }
        & \rightvar{     0.541        }
        \\
        & BK
        & \rightvar{     0.193        }
        & \rightvar{     10.0\hphantom{00}        }
        & \rightvar{     0.541        }
        \\
        \hline
        \multirow{3}{*}{H$_2$ (8 qubits)} & JW
        & \rightvar{     22.3\hphantom{00}       }
        & \rightvar{     51.4\hphantom{00}        }
        & \rightvar{     17.7\hphantom{00}        }
        \\
        & Parity
        & \rightvar{     38.0\hphantom{00}        }
        & \rightvar{     70.8\hphantom{00}        }
        & \rightvar{     18.9\hphantom{00}        }
        \\
        & BK
        & \rightvar{     38.4\hphantom{00}        }
        & \rightvar{     169\hphantom{.000}        }
        & \rightvar{     19.5\hphantom{00}        }
        \\
        \hline
        \multirow{3}{*}{LiH} & JW
        & \rightvar{     54.2\hphantom{00}       }
        & \rightvar{     266\hphantom{.000}        }
        & \rightvar{     14.8\hphantom{00}        }
        \\
        & Parity
        & \rightvar{     85.8\hphantom{00}        }
        & \rightvar{     760\hphantom{.000}        }
        & \rightvar{     26.5\hphantom{00}        }
        \\
        & BK
        & \rightvar{     75.5\hphantom{00}        }
        & \rightvar{     163\hphantom{.000}        }
        & \rightvar{     68.0\hphantom{00}        }
        \\
        \hline
        \multirow{3}{*}{BeH$_2$} & JW
        & \rightvar{     135\hphantom{.000}       }
        & \rightvar{     1670\hphantom{.000}        }
        & \rightvar{     67.6\hphantom{00}        }
        \\
        & Parity
        & \rightvar{     239\hphantom{.000}        }
        & \rightvar{     3160\hphantom{.000}        }
        & \rightvar{     130\hphantom{.000}        }
        \\
        & BK
        & \rightvar{     197\hphantom{.000}        }
        & \rightvar{     947\hphantom{.000}        }
        & \rightvar{     238\hphantom{.000}        }
        \\
        \hline
        \multirow{3}{*}{H$_2$O} & JW
        & \rightvar{     1040\hphantom{.000}       }
        & \rightvar{     2840\hphantom{.000}        }
        & \rightvar{     258\hphantom{.000}        }
        \\
        & Parity
        & \rightvar{     2670\hphantom{.000}        }
        & \rightvar{     6380\hphantom{.000}        }
        & \rightvar{     429\hphantom{.000}        }
        \\
        & BK
        & \rightvar{     2090\hphantom{.000}        }
        & \rightvar{     10600\hphantom{.000}        }
        & \rightvar{     1360\hphantom{.000}        }
        \\
        \hline
    \end{tabular}
\end{table}

The two different cost functions used to optimise the $\beta$-distributions provide very similar variance. This is remarkable considering that the diagonal cost function defined in Eq.~\eqref{eq:cost_diag} is convex. For any given molecule, our numerical analysis indicated that the non-convex cost function Eq.~\eqref{eq:cost} always converged to a single collection of distributions, irrespective of the initialised values for the distributions.

\end{section}

\begin{section}{Conclusion}\label{sec:conclusion}

This article has considered the measurement problem associated with molecular energy estimation on quantum computers and has proposed a new algorithm for that problem. Investigating the principal subroutine present in \emph{classical shadows using random Pauli measurements}, we are able to produce a non-uniform version of these shadows, termed \emph{locally-biased classical shadows}. These locally biased classical shadows require probability distributions for each qubit. By solving a convex optimisation problem for a given molecular Hamiltonian, we find appropriate probability distributions for measuring states which are close to the true ground state of the molecular Hamiltonian. We benchmark the proposed algorithm on systems up to 16 qubits in size and observe significant and consistent improvement over Pauli grouping heuristic algorithms. To claim this, we have compared with the LDF heuristic, noting that Ref.~\cite{verteletskyi2020measurement} finds that other heuristics produce a number of qubit-wise commuting sets that only differ by 10\%. We are able to obtain this improvement without solving computationally-intensive problems. This is unlike Pauli grouping methods which use node colouring and minimum clique covering, whose running times are quadratic in the number of Pauli terms of the Hamiltonian.

Finally, the introduction of such a domain-specific cost function is, to the authors' knowledge, novel. It is sufficiently general that applications of this idea will also be relevant in fields unrelated to quantum chemistry.

\end{section}

%%%%% %%%%% %%%%% %%%%% %%%%% %%%%% %%%%% %%%%% %%%%% %%%%% %%%%% %%%%%

%%%%% %%%%% %%%%% %%%%% %%%%% %%%%% %%%%% %%%%% %%%%% %%%%% %%%%% %%%%%

\bibliographystyle{unsrt}
\bibliography{references}

\appendix
\begin{section}{Comparative Algorithms for Estimating Molecular Hamiltonians}\label{app:comparative_algorithms}

This appendix provides details for the two algorithms against which we benchmark locally-biased classical shadows. Recall that we assume that the molecular Hamiltonian $H$ acts on $n$-qubits and that a given state $\rho$ is provided and whose energy we aim to estimate. We write
\begin{align}
    H = \sum_{P \in \{I,X,Y,Z\}^{\otimes n}}
            \alpha_P P
\end{align}
and denote by $H_0$ the traceless part of $H$. Denote by $\|\alpha\|_{\ell^1}$ the $\ell^1$-norm of the traceless coefficients, and associate with this norm the following $\ell^1$-distribution $\gamma$ over the Pauli operators:
\begin{align}
    \|\alpha\|_{\ell^1}
        &= \sum_{P\in\{I,X,Y,Z\}^{\otimes n}\backslash\{I^{\otimes n}\}} | \alpha_P |
    &
    \gamma(P)
        &= \frac 1 {\| \alpha\|_{\ell^1}}
            |\alpha_P|
\end{align}

We expose the dependence of the algorithms on the identity coefficient $\alpha_{I^{\otimes n}}$. This is because in the practical setting of molecular Hamiltonians considered in this text, the identity coefficient can be on the order of 10\% of $\|\alpha\|_{\ell^1}$. For the $\ell^1$ sampling, it would be unwise to prepare $\rho$ only to subsequently measure no qubits. For the largest degree first setting, it would be unwise to arbitrarily associate the identity operator to one of the collections of qubit-wise commuting Pauli operators thereby associating the identity operator's weight $|\alpha_{I^{\otimes n}}|$ to the corresponding collection's weight and overly favouring the sampling of said collection.

Recall the notation from Section~\ref{sec:shadows_uniform}. Given a Pauli operator $P$, we let $\mu(P,i)\in\{\pm 1\}$ denote the eigenvalue measurement when qubit $i$ is measured in the $P_i$ basis. For a subset $A\subseteq\{1,2,\dots, n\}$ we write $\mu(P,A) = \prod_{i\in A} \mu(P,i)$.

\begin{subsection}{Ell-1 algorithm}

This algorithm was the first algorithm proposed for estimating energies in the context of variational quantum algorithms \cite{Wecker2015ProgressAlgorithms}.
The $\ell^1$-norm of the traceless coefficients provides the probability distribution $\gamma$. We may use this probability distribution to select a Pauli operator $P$ which dictates the Pauli basis in which to measure the state $\rho$, thereby providing an estimate for $\tr(\rho P)$. Algorithm~\ref{alg:ell_1} describes this procedure precisely.

\begin{algorithm}
	\caption{Energy estimation via $\ell^1$-distribution over Pauli bases}
    \label{alg:ell_1}
	\begin{algorithmic}
	    \For{sample $s \in \{1, \dots, S\}$}
	        \State Prepare $\rho$;
	        \State Randomly pick $P$ from $\gamma$-distribution;
	        \For{qubit $i\in \supp(P)$}
	            \State Measure qubit $i$ in $P_i$ basis providing eigenvalue measurement $\mu(P,i) \in \{\pm1\}$;
	        \EndFor
	        \State Estimate observable expectation
	            \[
	            \nu^{(s)}
	            = 
	            \alpha_{I^{\otimes n}}
	            +
	            \| \alpha \|_{\ell^1}
	            \cdot
	            \sgn(\alpha_P)
	            \cdot
	            \mu(P,\supp(P))
	            \]
        \EndFor
		\Return $\nu = \frac 1 S \sum_s \nu^{(s)}$.
	\end{algorithmic}
\end{algorithm}

For completeness, we record calculations for the expectation and variance of this estimator. 
Consider a single shot giving $\nu$.
Let $\E_P$ denote the expected value over the distribution $\gamma(P)$ and let $\E_{\mu(P)}$ denote the expected value over the measurement outcomes for a fixed Pauli operator $P$. Without loss of generality, we may assume $\alpha_{I^{\otimes n}}=0$. Now $\E_{\mu(P)} \mu(P,\supp(P)) = \tr(\rho P)$ whence
\begin{equation}
    \E(\nu)
    = 
    \E_P \E_{\mu(P)} \nu
    =
    \E_P \| \alpha \|_{\ell^1} \sgn(\alpha_P) \tr(\rho P)
    =
    \sum_P \alpha_P \tr(\rho P)
    =
    \tr(\rho H).
\end{equation}
The variance (for a single sample) can also be calculated:
\begin{align}
    \Var(\nu)
    =
    \sum_{P\neq I^{\otimes n}} ( \gamma_P \cdot \|\alpha \|_{\ell^1}^2 ) - \tr(\rho H_0)^2
    =
    \|\alpha \|_{\ell^1}^2 - \tr(\rho H_0)^2.
\end{align}

\end{subsection}

\begin{subsection}{Largest degree first}

Consider a Hamiltonian decomposed into $K$ collections $\{C^{(k)}\}_{k=1}^{K}$ excluding the identity term: 
$H=\alpha_{I^{\otimes n}}I^{\otimes n} + \sum_{k=1}^K H_k$ 
where $H_k=\sum_{Q\in C^{(k)}}\alpha_Q Q$. 
(Recalling the notation $H_0$ for the traceless part of the Hamiltonian, we note that $H_0 = \sum_k H_k$.)
Suppose that for each collection $C^{(k)}$, the Pauli terms commute qubit-wise: for all $Q,R \in C^{(k)}$ and all qubits $i$, we have $[Q_i,R_i]=0$.
In this case, there exists a Pauli operator $P^{(k)}$ of weight $n$ which commutes qubit-wise with each Pauli in $C^{(k)}$.

Consider also a probability distribution $\kappa$ over the collections $\{C^{(k)}\}_{k=1}^K$. Sampling from this distribution provides Algorithm~\ref{alg:decomposed}.

\begin{algorithm}
    \caption{Energy estimation via decomposition into commuting terms}
    \label{alg:decomposed}
    \begin{algorithmic}
        \For{sample $s \in \{1, \dots, S\}$}
            \State Prepare $\rho$;
            \State Randomly pick collection $C^{(k)}$ from $\kappa$-distribution;
            \For{qubit $i\in\{1,2,\dots, n\}$}
                \State Measure qubit $i$ in $P_i^{(k)}$ basis providing eigenvalue measurement $\mu(P^{(k)}, i) \in \{\pm1\}$;
            \EndFor
            \State Estimate observable expectation
                \[
                \nu^{(s)} 
                = 
                \alpha_{I^{\otimes n}} 
                +
                \frac1{\kappa (C^{(k)})}
                \sum_{Q\in C^{(k)}}
                    \alpha_Q
                    \mu(P^{(k)}, \supp(Q))
                \]
        \EndFor
        \Return $\nu = \frac 1 S \sum_s \nu^{(s)}$.
	\end{algorithmic}
\end{algorithm}
Consider a single sample giving an estimator $\nu$. Similar to the $\ell^1$ algorithm we observe that $\nu$ recovers $\tr(\rho H)$ in expectation. Specifically, for a fixed collection $C^{(k)}$ and hence a fixed full-weight Pauli operator $P^{(k)}$, let $\E_{\mu(P^{(k)})}$ denote the expected value over the measurement outcomes associated with $P^{(k)}$. Now 
$\E_{\mu(P^{(k)})} \mu(P^{(k)}, \supp(Q)) = \tr(\rho Q)$ whenever $Q\in C^{(k)}$
and if we let $\E_{C^{(k)}}$ denote the expected value over the distribution $\kappa(C^{(k)})$ we conclude
\begin{equation}
    \E(\nu) 
    = 
    \E_{C^{(k)}} \E_{\mu(P^{(k)})} \nu
    =
    \E_{C^{(k)}} \frac1{\kappa(C^{(k)})} \sum_{Q\in C^{(k)}} \alpha_Q \tr(\rho Q)
    =
    \tr(\rho H)
\end{equation}
Again, we have assumed without loss of generality that $H$ is traceless.

The variance may be calculated as 
\begin{align}
    \Var(\nu)
    = 
    \left(
    \sum_{k=1}^K
        \frac1{\kappa (C^{(k)})} 
        \sum_{Q,R\in C^{(k)}}
            \alpha_Q \alpha_R
            \prod_{i\in\supp(QR)} \tr(\rho QR)
    \right)
    - \tr( \rho H_0 )^2
\end{align} 
An alternative formula reads \cite[Appendix A]{Kandala2017Hardware-efficientMagnets}
\begin{align}
    \Var(\nu) 
    = 
    \sum_{k=1}^K
        \frac1{\kappa (C^{(k)}) } 
        \sum_{Q,R\in C^{(k)}}
            \alpha_Q \alpha_R
            \left(
                \tr(\rho QR)
                - 
                \tr(\rho Q) \tr(\rho R)
            \right)
\end{align}

Our analysis uses the LDF heuristics in order to obtain such a decomposition.
Various heuristics for building decompositions are investigated in~\cite{verteletskyi2020measurement} for systems up to 36 qubits. The heuristics give numbers of groups that differ by 10\% and they conclude LDF is attractive due to its short runtime.
For the LDF decomposition, we first construct a graph $G=(V,E)$ where:
\begin{itemize}
    \item $v_Q \in V$ for all $Q\neq I^{\otimes n}$ such that $\alpha_Q\neq0$;
    \item $e_{Q,R}\in E$ if $\{Q_i,R_i\}=0$ for some qubit $i$.
\end{itemize}
Second, the vertices of the graph are sorted in decreasing order of their degrees, and the smallest available colour is then progressively assigned to each ordered vertex. Colours correspond to collections in which Pauli operators commute qubit-wise. 
The LDF heuristics guarantee the number of colours of the graph is at most one plus the degree of the graph: $K \le 1 + \Delta(G)$. 
With this decomposition constructed, our analysis is done with the following choice for $\kappa$:
\begin{align}
    \kappa ( C^{(k)} )
    &= 
    \frac
        {\| \alpha |_{C^{(k)}} \|_{\ell^1}}
        {\|\alpha \|_{\ell^1}}
    &
    \| \alpha |_{C^{(k)}} \|_{\ell^1}
    &=
    \sum_{Q \in C^{(k)}} | \alpha_Q |
\end{align}
Qiskit \cite{aleksandrowicz2019qiskit} provides an implementation of the decomposition procedure.\footnote{\url{https://qiskit.org/documentation/_modules/qiskit/aqua/operators/legacy/pauli_graph.html##PauliGraph} Last accessed on June 6, 2020}

\end{subsection}

\end{section}

\end{document}